\newtheorem{theorem}{Theorem}[subsection]
\newtheorem{definition}[theorem]{Definition}
\numberwithin{equation}{subsection} 
\title{The Information Content of Systems in General Physical Theories}
\author{Ciar\'{a}n M. Lee \qquad\qquad Matty J. Hoban
\institute{University of Oxford\\Department of Computer Science\\Wolfson Building\\Parks Road\\Oxford OX1 3QD, UK}
\email{\{ciaran.lee,matthew.hoban\}@cs.ox.ac.uk}
}
\begin{document}
\maketitle

\begin{abstract}
What kind of object is a quantum state? Is it an object that encodes an exponentially growing amount of information (in the size of the system) or more akin to a probability distribution? It turns out that these questions are sensitive to what we do with the information. For example, Holevo's bound tells us that $n$ qubits only encode $n$ bits of classical information but for certain communication complexity tasks there is an exponential separation between quantum and classical resources. Instead of just contrasting quantum and classical physics, we can place both within a broad landscape of physical theories and ask how non-quantum (and non-classical) theories are different from, or more powerful than quantum theory. For example, in communication complexity, certain (non-quantum) theories can trivialise all communication complexity tasks. In recent work [C. M. Lee and M. J. Hoban, Proc. Royal Soc. A 472 (2190), 2016], we showed that the immense power of the information content of states in general (non-quantum) physical theories is not limited to communication complexity. We showed that, in general physical theories, states can be taken as ``advice" for computers in these theories and this advice allows the computers to easily solve any decision problem. Aaronson has highlighted the close connection between quantum communication complexity and quantum computations that take quantum advice, and our work gives further indications that this is a very general connection. In this work, we review the results in our previous work and discuss the intricate relationship between communication complexity and computers taking advice for general theories. 
\end{abstract}

\section{Introduction}    

Quantum theory holds the promise of more powerful algorithms and securer communication \cite{shor}. In turn, these possibilities have affected the kinds of questions we ask about quantum theory. In particular, if quantum theory was replaced with another theory, what would the information processing consequences be \cite{Barrett-2007,Popescu-review,PR-van-Dam}? By asking these sorts of questions we can understand quantum theory better through its limitations as well as its strengths, and this understanding will allow us to maximise its potential. 

One oft-asked question in the foundations of quantum theory is what kind of object is the quantum state \cite{Aaronson-advice}? Is it like a classical probability distribution or an exponentially long vector \cite{PBR}? For $n$ qubits, $2^n$ coefficients are required, in general, to describe the state of the system yet Holevo's theorem tells us that only $n$ classical bits can be reliably encoded into the system \cite{holevo}. Clearly, the answer to the question is sensitive to the context in which it is asked. 

One concrete context in which we can ask about the information content of quantum states is in the study of communication complexity \cite{Buhrman-review}. There are many varieties of communication complexity and depending on the variety, there is no separation between classical and quantum resources \cite{Nielsen-comm} or there is an \textit{exponential} separation between randomised classical two-way communication and one-way quantum communication \cite{Klartag}. That is, for certain tasks, to classically simulate the sending of a quantum state from Alice to Bob requires an exponential amount of (even two-way) randomised classical communication. In this way, the quantum state seems like something very different from a classical probability distribution. 

So if a quantum state is not a probability distribution, then what is it? An approach to answering this question is to devise a general framework of theories that both includes classical and quantum theory as examples and also makes good operational sense. Luckily, such frameworks have been proposed \cite{Barrett-2007,Pavia1,Pavia2,Hardy-2011} and have laid the path for an impressive range of results. Within this framework, then, we can compare the information content of quantum states with the information content of states within general theories. We can ask how the information content of a state depends on the underlying physical features of the theory and viewing quantum theory in this more general context can yield insight into the nature of the quantum state. Returning to the theme of communication complexity, for a particular task there is a vast difference between an arbitrary theory and quantum theory. For a theory colloquially known as ``Boxworld" \cite{Barrett-2007}, communication complexity tasks can be rendered completely trivial \cite{PR-van-Dam}. Given this perspective, states in this theory are vastly more powerful than in quantum theory.

The result of trivial communication complexity for Boxworld has motivated the non-triviality of communication complexity as an information theoretic principle that could pick out quantum theory, or at least some subset of all theories \cite{Brassard}. In the restricted, but very related setting of studying non-locality, it has been shown that there exists a consistent set of non-quantum correlations (called the ``almost quantum correlations") that does not lead to trivial communication complexity so this principle cannot single out quantum theory \cite{almostq}. It is then natural to ask what are the theories that look like quantum theory from the perspective of communication complexity and do they share very common structure?

One difficulty with studying communication complexity is in the variety of different scenarios and resources that can be studied. As highlighted above, for one task, one can have an exponential separation between quantum and classical, and no separation at all for another. There are also complications in translating between scenarios for different theories. For example, in quantum communication complexity, due to teleportation we can translate between the setting of having only communication of qubits to the setting of having pre-shared entanglement and only classical communication \cite{Buhrman-review}. Theories more general than quantum theory may not permit teleportation so certain comparisons can seem unfair. Boxworld with reversible dynamics does not permit teleportation \cite{Rev}, and so even though communication complexity is rendered trivial in the case with pre-shared Boxworld correlations, it's not clear if every protocol of this form can be simulated by communicating only a constant number of systems in Boxworld without pre-shared correlations. We need a clear framework in which we can ask general questions about a theory that does not make too many assumptions about resource interconversion within a theory. 

In this direction, we look at the computational complexity of circuits that take advice. This gives a general framework that can address the question of how much information can be encoded in a state within a general theory. As Aaronson pointed out \cite{Aaronson-advice}, this framework is closely related to the setting of one-way communication complexity so we can gain insight into the latter by studying the former. We will further elaborate on the connections between the two. In particular, we show that an argument demonstrating that communication complexity is trivial in Boxworld can also be used to demonstrate the computational complexity of Boxworld circuits that take advice. Going further, we non-trivially bound the computational complexity of circuits that take advice for a general class of theories satisfying natural assumptions. We then comment on how this result might be used to classify theories with non-trivial communication complexity. The work presented here is based on a general discussion and technical results in \cite{LeeHoban} but now with expanded discussion from the perspective of communication complexity.

\section{Circuits with advice in general physical theories}

\subsection{Operational theories} 

We work in the circuit framework for generalised probabilistic theories developed by Hardy in \cite{Hardy-2011} and Chiribella, D'Ariano and Perinotti in \cite{Pavia1,Pavia2}. The presentation here is most similar to that of Chiribella \textit{et al}. We now provide a brief review of this framework, see \cite{LB-2014, LeeHoban} for more in-depth reviews and an extended discussion of computation in general theories.  

A theory within this framework specifies a set of laboratory devices that can be connected together in different ways to form experiments and assigns probabilities to different experimental outcomes. Each device has a classical pointer indicating an event that has occurred. In a general theory, one can depict the connections of devices in some experimental set-up by closed circuits. A requirement on any theory is that it should give probabilistic predictions about the occurrence of possible outcomes (i.e. the value of the classical pointer). It is thus demanded that, in this framework, closed circuits define probability distributions. Given this structure, one then says that two physical devices are equivalent (from the point of view of the theory) if replacing one by the other in any closed circuit does not change the probabilities. The set of equivalence classes of devices with no input ports are referred to as \emph{states}, devices with no output ports as \emph{effects} and devices with both input and output ports as \emph{transformations}. 

The notation $|s_{r})_A$ is used to represent a state of system type $A$, where $r$ is the outcome of the classical pointer, and $_A(e_{r}|$ to represent an effect on system type $A$, so that if the effect $_A(e_{r_2}|$ is applied to the state $|s_{r_1})_A$, the probability to obtain outcome $r_1$ on the physical device representing the state and outcome $r_2$ on the physical device representing the effect is $_A(e_{r_2}|s_{r_1})_A := P(r_1,r_2)$. The fact that closed circuits correspond to probabilities can be leveraged to show that the set of states, effects and transformations each give rise to a vector space and that the transformations and effects act linearly on the vector space of states. We assume in this work that all vector spaces are finite dimensional.

We can now formally define some examples of physical principles. 
\begin{definition}[Causality \cite{Pavia1}]
A theory is said to be \emph{causal} if the marginal probability of a preparing a state is independent of the choice of which measurement follows the preparation. 
\end{definition}

\begin{definition}[Tomographic locality \cite{Pavia1, Hardy-2011, Barrett-2007}]
A theory satisfies tomographic locality if every transformation can be uniquely characterised by local process tomography. Local process tomography is the act of collecting statistics from only inputting local, product states into a process and only making local measurements. 
\end{definition}

We will now define the principle of bit-symmetry. Before we define this principle, the following concepts must be introduced. We say the laboratory device $\{\mathcal{U}_j\}_{j\in{Y}}$, where $j$ indexes the positions of the classical pointer, is a \emph{coarse-graining} of the device $\{\mathcal{E}_i\}_{i\in{X}}$ if there is a disjoint partition $\{X_j\}_{j\in{Y}}$ of $X$ such that $\mathcal{U}_j=\sum_{i\in{X_j}}\mathcal{E}_i$. That is, coarse-graining arises when some outcomes of a laboratory device are joined together. The device $\{\mathcal{E}_i\}_{i\in{X}}$ is said to \emph{refine} the device $\{\mathcal{U}_j\}_{j\in{Y}}$. A state is \emph{pure} if it does not arise as a \emph{coarse-graining} of other states; a pure state is one for which we have maximal information. A state is \emph{mixed} if it is not pure and it is \emph{completely mixed} if any other state refines it. That is, $|c)$ is completely mixed if for any other state $|\rho)$, there exists a non-zero probability $p$ such that $p|\rho)$ refines $|c)$. States $\{|\sigma_i)\}_{i=1}^N$ are \emph{perfectly distinguishable} if there exists a measurement, corresponding to effects $\{(e_i|\}_{i=1}^N$, such that $(e_i|\sigma_j)=\delta_{ij}$ for all $i,j$. 

\begin{definition}[Bit-symmetry \cite{MU}] \label{symm}
A theory satisfies \emph{bit-symmetry} if for any two $2$-tuples of pure and perfectly distinguishable states $\{|\rho_1),|\rho_2)\},\{|\sigma_1), |\sigma_2)\},$ there exists a reversible transformation $T$ such that $T|\rho_i)=|\sigma_i)$ for $i=1,2$.
\end{definition}

Note that causality, tomographic locality and bit-symmetry are all logically independent: generalised probabilistic theories satisfying any subset (including the empty subset) can be defined. For example, standard quantum theory satisfies all three, quantum theory with real amplitudes satisfies causality and bit-symmetry but not tomographic locality, Boxworld satisfies causality and tomographic locality but not bit-symmetry \cite{MU} and the theory constructed in \cite{cause} does not satisfy causality. 

\subsection{Efficient circuits that take advice} \label{general-def} 

To define the class of efficient computation in a general theory, we must first define the notions of a uniform circuit family and an acceptance condition for an arbitrary theory. The notion of a poly-size uniform circuit family $\{C_x\}$, which is indexed by some bit string $x$ is defined in \cite{LB-2014}. In this definition, a classical Turing machine gives an efficient description of a circuit, and the classical outcomes associated with the pointers on the devices are efficiently processed by this classical Turing machine to give a classical output (acceptance or rejection). 

In the paradigm of uniform circuits that take advice, one is given both the problem instance $x$ and an advice state, so the constructed circuit $C_x$ must have open system ports into which this state can be plugged. Henceforth we will assume that uniform circuit families consist of collections of circuits with a number of open input ports, which can grow as a polynomial in $|x|$, which we call the \emph{auxiliary register}. Note that the choice of finite gate set determines the possible system types of the auxiliary register. Given this convention, we can define efficient computation with trusted advice in a specific general theory.


\begin{definition}
For a general theory $\bold{G}$, a language $\mathcal{L}\subseteq\{0,1\}^{n}$ is in the class $\bold{BGP/gpoly}$ if there exists a poly-sized uniform family of circuits $\{C_x\}$ in $\bold{G}$, a set of (possibly non-uniform) states $\{\sigma_{n}\}_{n\geq 1}$ on a composite system of size $d(n)$ for some polynomial $d:\mathbb{N}\rightarrow\mathbb{N}$, and an efficient acceptance criterion, such that for all strings $x\in\{0,1\}^n$:
\begin{enumerate}
\item If $x\in\mathcal{L}$ then $C_x$ accepts with probability at least $2/3$ given $\sigma_{n}$ as input to the auxiliary register.

\item If $x\notin\mathcal{L}$ then $C_x$ accepts with probability at most $1/3$ given $\sigma_{n}$ as input to the auxiliary register.
\end{enumerate}
\end{definition}

Here by ``composite system of size $d(n)$'', we mean that the number of systems, or open ports, of the auxiliary register -- into which the advice state is input -- increases as $d(n)$, for $d$ a polynomial in the input size. The constants $(\frac{2}{3},\frac{1}{3})$ can be chosen arbitrarily as long as they are bounded away from $\frac{1}{2}$ by some constant. The example of $\bold{BGP/gpoly}$ for quantum theory, called $\bold{BQP/qpoly}$ was introduced by Nishimura and Yamakami \cite{NY03}. The classical version of this class is known to be equal to $\bold{P/poly}$, the class of deterministic, classical Turing machines that take advice.

We now look at Boxworld with respect to our definitions advice in general physical theories. Towards this end we provide a brief definition of Boxworld, see e.g. \cite{PR-trade-off} for a more in-depth discussion. For a given single system $A$ in Boxworld, there are two choices of binary-outcome measurements, $\{_A(x_a|\}$ for $x,a\in\{0,1\}$. Here $x$ is the bit denoting the two possible choices of measurement and $a$ is the bit denoting the two possible outcomes of the chosen measurement, i.e the two measurements on system $A$ are $\{_A(0_0|, _A(0_1|\}$ and $\{_A(1_0|, _A(1_1|\}$. States and measurements in this theory can produce correlations associated with the so-called Popescu-Rohrlich non-local box \cite{PRbox}. These bipartite correlations can be extended to an $n$-partite system where now for the $j$th party, $x_{j}\in\{0,1\}$ and $a_{j}\in\{0,1\}$ are the choice of measurement and its outcome respectively. There exists a state $|\rho_{f})$ and effects $\{_j(x_{j},a_{j}|\}$ for all $j$ parties that produce the probabilities
$$(x_{1},a_{1}|(x_{2},a_{2}|...(x_{n},a_{n}|\rho_f)= 
\begin{cases}
	\frac{1}{2^{n-1}} & \text{if $\bigoplus_{j=1}^{n}a_{j}= f(x)$,}\\
	0 & \text{otherwise,}
\end{cases}
$$
where $\bigoplus$ represents summation modulo $2$ and $f:\{0,1\}^{n}\rightarrow\{0,1\}$ is any Boolean function from the bit-string $x$ with elements $x_{j}$. Therefore, if the state $|\rho_{f})$ is prepared and local measurements described by effects $(x_{j},a_{j}|$ made, a classical computer can compute the parity of all outcomes $a_{j}$ and so we deterministically obtain the evaluation of a Boolean function $f(x)$. This relatively straightforward observation gives us the following result.

\begin{theorem}\label{thmboxworld}
\cite{LeeHoban} There exist generalised probabilistic theories  $\bold{G}$ satisfying causality and tomographic locality, which satisfy $\bold{BGP/gpoly}=\bold{ALL}$ where $\bold{ALL}$ is the class of all decision problems.
\end{theorem}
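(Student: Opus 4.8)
The plan is to let $\bold{G}$ be Boxworld itself, which, as noted above, satisfies causality and tomographic locality while failing bit-symmetry. Since $\bold{BGP/gpoly}\subseteq\bold{ALL}$ holds for any theory by definition, it suffices to show that \emph{every} language $\mathcal{L}\subseteq\{0,1\}^*$ lies in $\bold{BGP/gpoly}$ for this choice of $\bold{G}$. I would do this by exhibiting, for each input length $n$, an advice state, a uniform circuit, and a classical acceptance criterion that together decide $\mathcal{L}\cap\{0,1\}^n$ with certainty.

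For the construction, fix $n$ and let $f_n\colon\{0,1\}^n\rightarrow\{0,1\}$ be the characteristic function of $\mathcal{L}$, so that $f_n(x)=1$ exactly when $x\in\mathcal{L}$. Take the advice state $\sigma_n:=|\rho_{f_n})$, the $n$-partite Boxworld state defined above; it lives on a composite of $n$ subsystems, so the auxiliary-register size is $d(n)=n$, which is polynomial. On input $x=x_1\cdots x_n$, the circuit $C_x$ feeds the $j$-th subsystem of the advice register into the two-outcome measurement indexed by the bit $x_j$, namely $\{{}_j(x_j,0|,{}_j(x_j,1|\}$, producing outcomes $a_1,\dots,a_n$. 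The classical post-processing then computes $\bigoplus_{j=1}^n a_j$ and accepts if and only if this parity equals $1$.

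It then remains to check the three conditions, each of which is immediate. First, the defining property of $|\rho_{f_n})$ guarantees that every outcome string $(a_1,\dots,a_n)$ arising from measurement choices $(x_1,\dots,x_n)$ satisfies $\bigoplus_j a_j=f_n(x)$ with probability $1$ (the $2^{n-1}$ consistent strings each occur with probability $1/2^{n-1}$ and all others with probability $0$), so $C_x$ accepts with probability $1$ when $x\in\mathcal{L}$ and with probability $0$ otherwise, comfortably inside the $(2/3,1/3)$ window. Second, $C_x$ is built from $n$ single-system measurements wired according to an obviously $x$-dependent but trivially computable pattern, so it is poly-size and uniform in the sense of \cite{LB-2014}, and parity post-processing is efficient. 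Third, the family $\{\sigma_n\}$ is exactly the ``possibly non-uniform'' advice the definition permits; indeed, if $\mathcal{L}$ is undecidable then no Turing machine can prepare $|\rho_{f_n})$, which is precisely why the class collapses all the way to $\bold{ALL}$. The only content beyond this bookkeeping --- and the only real obstacle --- is verifying that $|\rho_{f_n})$ and the effects $\{{}_j(x_j,a_j|\}$ genuinely belong to Boxworld's state and effect spaces; this is the $n$-partite generalisation of the Popescu--Rohrlich box, and I would invoke the standard references on Boxworld for it rather than reconstruct it. Assembling the pieces gives $\mathcal{L}\in\bold{BGP/gpoly}$, and since $\mathcal{L}$ was arbitrary, $\bold{BGP/gpoly}=\bold{ALL}$.
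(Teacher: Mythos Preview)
Your proposal is correct and follows exactly the same approach as the paper: take $\bold{G}$ to be Boxworld, use the $n$-partite state $|\rho_{f_n})$ encoding the characteristic function of $\mathcal{L}$ as the advice, measure each subsystem according to the input bits, and post-process by parity. The paper's own proof is a two-sentence sketch that relies on the discussion immediately preceding the theorem; you have simply spelled out the bookkeeping (uniformity, polynomial size, acceptance probabilities) that the paper leaves implicit.
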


\begin{proof}
Clearly $\bold{BGP/gpoly}\subseteq\bold{ALL}$ is trivially true for Boxworld. The states $|\rho_{f})$ can be used as advice states and, as all decision problems can be represented by Boolean functions, it follows that $\bold{ALL}\subseteq\bold{BGP/gpoly}$.
\end{proof}

It was first established by Aaronson that $\bold{BQP/qpoly}\subseteq\bold{PP/poly}\subsetneq \bold{ALL}$ thus quantum mechanical states cannot encode the answers to all problems, unlike the case for Boxworld. So, clearly, we need more principles than causality and tomographic locality to give theories that have non-trivial upper bounds on the computational power of advice. In the following result, we show that the principle of bit-symmetry is such a principle and a theory satisfying it (recall that Boxworld does not) cannot use advice to solve all decision problems.

\begin{theorem}\label{advicetheorem}
\cite{LeeHoban} Any causal, bit-symmetric, tomographically local theory $\bold{G}$ with at least two pure and distinguishable states satisfies
$$\bold{BGP/gpoly} \subseteq \bold{PP/poly} \subsetneq \bold{ALL}.$$
\end{theorem}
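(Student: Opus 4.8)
The plan is to treat the two containments separately. The strict inclusion $\bold{PP/poly}\subsetneq\bold{ALL}$ needs nothing about $\bold{G}$: a fixed $\bold{PP}$ machine with $\mathrm{poly}(n)$ bits of advice realises at most $2^{\mathrm{poly}(n)}$ distinct Boolean functions on inputs of length $n$, whereas there are $2^{2^{n}}$ such functions, so a routine diagonalisation over the countably many (machine, advice-length) pairs exhibits a language outside $\bold{PP/poly}$. The real work is $\bold{BGP/gpoly}\subseteq\bold{PP/poly}$, and once it is in hand the theorem says exactly that, unlike Boxworld (Theorem~\ref{thmboxworld}), no causal, bit-symmetric, tomographically local theory can pack the answers to all decision problems into polynomially many elementary systems of advice.

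For the main inclusion I would mimic Aaronson's route from $\bold{BQP/qpoly}$ to $\bold{PP/poly}$, \emph{distilling} the advice: the (possibly exponential-dimensional and non-uniform) advice state $\sigma_{n}$ on the size-$d(n)$ auxiliary register gets replaced by a short classical advice string plus a postselected $\bold{G}$-computation that manufactures a state good enough to fool every circuit $C_{x}$ with $|x|=n$. Concretely, the classical advice is a list $x_{1},\dots,x_{t}$ of inputs together with their correct membership bits $b_{1},\dots,b_{t}$, where $t=\mathrm{poly}(n)$; on input $x$ the simulating machine prepares the completely mixed state of the auxiliary register, then for $i=1,\dots,t$ runs $C_{x_{i}}$ on the current state and postselects on its outcome being $b_{i}$, and finally runs $C_{x}$ on the resulting state and reports its verdict. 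This is a single postselected circuit of $\bold{G}$ driven by $\mathrm{poly}(n)$ classical advice, so, invoking the simulation results for tomographically local theories in \cite{LB-2014,LeeHoban} together with the general-theory analogue of $\bold{PostBQP}=\bold{PP}$ (that efficient postselected computation in such a theory lies in $\bold{PP}$), the whole procedure sits in $\bold{PP/poly}$.

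Correctness of the distillation rests on two quantitative inputs, and this is where the hypotheses are used. First, one needs that the completely mixed state $|c)$ of a composite of $k$ elementary systems refines $p\,|\rho)$ for every state $|\rho)$ with $\log(1/p)=O(k)$: bit-symmetry together with causality constrains $|c)$ enough to force it to be a uniform mixture over a maximal set of perfectly distinguishable states of the composite (cf. \cite{MU}), and tomographic locality bounds the number of such states by the product of the local capacities, so for the size-$d(n)$ register $\log(1/p)=\mathrm{poly}(n)$ --- the overlap of $|c)$ with $\sigma_{n}$ is only exponentially small, not worse. Second, one needs a ``gentle postselection'' estimate: whenever the current normalised postselected state still disagrees with $\sigma_{n}$ on some input $x$ --- meaning their $C_{x}$-acceptance probabilities differ by a fixed constant --- postselecting on the advice-correct outcome multiplies the weight of the $\sigma_{n}$-component by a constant factor bounded away from $1$. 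Since this weight is at most $1$ and starts at $p$, after $t=O(\log(1/p))=\mathrm{poly}(n)$ rounds no length-$n$ input can disagree, and then running $C_{x}$ on the distilled state reproduces $\Pr[C_{x}\text{ accepts}\mid\sigma_{n}]$ up to a small constant, preserving the $(2/3,1/3)$ gap after the usual rescaling.

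I expect the ``gentle postselection'' step to be the main obstacle, because it needs two things quantum theory supplies for free. One is a post-measurement update rule, so that the distilled state is even well defined; this may require deriving from the axioms that the relevant measurements extend to suitable instruments, and it is here that bit-symmetry --- through its reversible transformations interchanging pure perfectly-distinguishable pairs --- is likely doing the real work, e.g. by letting one pass to a convenient distinguishable ``basis'' in which the update is transparent. The other is a theory-independent replacement for the quantum ``almost as good as new''/gentle-measurement lemma that bounds how much a postselected outcome redistributes weight between the $\sigma_{n}$-component of a state and its complement, phrased via a distinguishability-based surrogate for fidelity; an alternative here, again powered by bit-symmetry, would be a learning-theoretic bound showing the states of the auxiliary register have fat-shattering dimension polynomial in the number of elementary systems (this is what fails for Boxworld). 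A secondary obstacle is that the completely mixed state of the auxiliary register must be shown to exist and to be efficiently describable for the simulating machine --- existence from finite-dimensionality and the bit-symmetry-induced structure, describability from the explicit uniform-mixture form above.
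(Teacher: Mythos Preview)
The paper does not actually prove this theorem: it is stated with attribution to \cite{LeeHoban} and no argument is given in the present text, so there is no in-paper proof to compare your proposal against. Your sketch does follow the natural route suggested by the surrounding discussion --- Aaronson-style advice distillation combined with a $\mathbf{PostBGP}\subseteq\mathbf{PP}$ simulation for tomographically local theories --- and the obstacles you flag (a post-measurement update rule and a gentle-measurement surrogate, both to be extracted from bit-symmetry via the self-duality results of \cite{MU}) are exactly the places where the generalisation from quantum theory requires genuine work; but whether your handling of those steps matches the cited reference cannot be judged from this paper alone.
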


\section{Connections between advice and communication complexity}

Earlier, we discussed the information content of states within general theories from the point-of-view of communication complexity. However, the framework and results in this work were phrased in terms of computations that take advice. This framework allows us to concretely ask how much information content there is in a state and we showed that informational principles can limit the information content of states. We now end with some comments on the connection between advice and communication complexity. A pertinent question is whether our results can say something about communication complexity. For example, if $\bold{BGP/gpoly}=\bold{ALL}$ for some theory $\bold{G}$, does this mean all communication complexity tasks are trivial in this theory? 

In the case of Boxworld, we can adapt the proof that $\bold{BGP/gpoly}=\bold{ALL}$ to the communication complexity scenario. In such a scenario, Alice has an input bit-string $x\in\{0,1\}^n$ and Bob has $y\in\{0,1\}^n$ and they wish to perform some function $f(x,y)$. They are allowed to share arbitrary states and correlations in a theory prior to receiving the inputs but after receiving the inputs they can only classically communicate. In the case of Boxworld, for a particular function $f(x,y)$, they prepare the state $\vert\rho_{f(x,y)})$ described above and the first $n$ systems are held by Alice whereas the second $n$ systems are held by Bob. Upon receiving the inputs $x$ and $y$ respectively, they make measurements corresponding to these input choices, Alice then takes the parity of her outputs and sends this bit value to Bob. Bob takes the parity of his outputs with the bit that Alice sends and gets $f(x,y)$ with certainty. Any such task can be achieved through only communicating one classical bit. 

Can such a mapping be made more general? Is communication complexity non-trivial in theories where $\bold{BGP/gpoly} \subseteq \bold{PP/poly}$? We conjecture that this second question has a positive answer. One can give a bound on one-way communication complexity (without pre-shared ``correlated" systems) in general theories that is similar to one that Aaronson proves in \cite{Aaronson-advice}, but for theories satisfying certain non-trivial properties such as purification \cite{Pavia1,Pavia2} which are not required in Theorem \ref{advicetheorem}. In general, the connection between communication complexity and computations that take advice is not as straightforward as in the example of Boxworld. For example, it may be the case that a communication complexity task could be rendered trivial but only when two parties share an exponential amount of resources in the size of the classical input. Van Dam's original protocol was of this form \cite{PR-van-Dam}, and even though we can improve its ``efficiency", this may not be possible in general. If the state that Alice and Bob have to share is exponentially large (the number of sub-systems the parties have is exponential in the size of the input) then this is not a viable advice state according to our definitions. What we can say is, if there exists an efficient protocol (the states used and runtime are polynomial in the size of the input) within a theory that trivialises communication complexity, then $\bold{BGP/gpoly}=\bold{ALL}$. Another possible indication of the connection between the two might be that the proof of Theorem \ref{advicetheorem} can be modified to derive a bound on one-way communication complexity (without prior correlations) in a theory satisfying the same principles in the statement of the theorem. 

In this work, we have related our work in \cite{LeeHoban} to the study of communication complexity in general physical theories. There has been some prior work in this direction studying the one-way communication complexity of general theories with some initially intriguing and nice results \cite{massar1,massar2}. In future work we would like to connect this study to the study of computations that take advice. In drawing these threads together we may understand why Nature chose quantum theory and not some other possibility, and in doing so, we might understand what kind of object is the quantum state.


\begin{thebibliography}{10}
\providecommand{\bibitemdeclare}[2]{}
\providecommand{\surnamestart}{}
\providecommand{\surnameend}{}
\providecommand{\urlprefix}{Available at }
\providecommand{\url}[1]{\texttt{#1}}
\providecommand{\href}[2]{\texttt{#2}}
\providecommand{\urlalt}[2]{\href{#1}{#2}}
\providecommand{\doi}[1]{doi:\urlalt{http://dx.doi.org/#1}{#1}}
\providecommand{\bibinfo}[2]{#2}

\bibitemdeclare{article}{Aaronson-advice}
\bibitem{Aaronson-advice}
\bibinfo{author}{S.~\surnamestart Aaronson\surnameend} (\bibinfo{year}{2005}):
  \emph{\bibinfo{title}{Limitations of Quantum Advice and One-Way
  Communication}}.
\newblock {\sl \bibinfo{journal}{Theory of Computing}} \bibinfo{volume}{1}, pp.
  \bibinfo{pages}{1--28}, \doi{10.4086/toc.2005.v001a004}.

\bibitemdeclare{article}{Barrett-2007}
\bibitem{Barrett-2007}
\bibinfo{author}{J.~\surnamestart Barrett\surnameend} (\bibinfo{year}{2007}):
  \emph{\bibinfo{title}{Information processing in generalized probabilistic
  theories}}.
\newblock {\sl \bibinfo{journal}{Phys. Rev. A}}
  \bibinfo{volume}{75}(\bibinfo{number}{032304}),
  \doi{10.1103/physreva.75.032304}.

\bibitemdeclare{article}{Brassard}
\bibitem{Brassard}
\bibinfo{author}{G.~\surnamestart Brassard\surnameend},
  \bibinfo{author}{H.~\surnamestart Buhrman\surnameend},
  \bibinfo{author}{N.~\surnamestart Linden\surnameend}, \bibinfo{author}{A.~A.
  \surnamestart Methot\surnameend}, \bibinfo{author}{A.~\surnamestart
  Tapp\surnameend} \& \bibinfo{author}{F.~\surnamestart Unger\surnameend}
  (\bibinfo{year}{2006}): \emph{\bibinfo{title}{Information processing in
  generalized probabilistic theories}}.
\newblock {\sl \bibinfo{journal}{Phys. Rev. Lett}}
  \bibinfo{volume}{96}(\bibinfo{number}{250401}),
  \doi{10.1103/physrevlett.96.250401}.

\bibitemdeclare{article}{Buhrman-review}
\bibitem{Buhrman-review}
\bibinfo{author}{H.~\surnamestart Buhrman\surnameend},
  \bibinfo{author}{R.~\surnamestart Cleve\surnameend},
  \bibinfo{author}{S.~\surnamestart Massar\surnameend} \&
  \bibinfo{author}{R.~\surnamestart de~Wolf\surnameend} (\bibinfo{year}{2010}):
  \emph{\bibinfo{title}{Nonlocality and communication complexity}}.
\newblock {\sl \bibinfo{journal}{Rev. Mod. Phys.}} \bibinfo{volume}{82}, p.
  \bibinfo{pages}{665}, \doi{10.1103/RevModPhys.82.665}.

\bibitemdeclare{article}{Pavia1}
\bibitem{Pavia1}
\bibinfo{author}{G.~\surnamestart Chiribella\surnameend},
  \bibinfo{author}{G.~M. \surnamestart D'Ariano\surnameend} \&
  \bibinfo{author}{P.~\surnamestart Perinotti\surnameend}
  (\bibinfo{year}{2010}): \emph{\bibinfo{title}{Probabilistic theories with
  purification}}.
\newblock {\sl \bibinfo{journal}{Phys. Rev. A}}
  \bibinfo{volume}{81}(\bibinfo{number}{062348}),
  \doi{10.1103/physreva.81.062348}.

\bibitemdeclare{article}{Pavia2}
\bibitem{Pavia2}
\bibinfo{author}{G.~\surnamestart Chiribella\surnameend},
  \bibinfo{author}{G.~M. \surnamestart D'Ariano\surnameend} \&
  \bibinfo{author}{P.~\surnamestart Perinotti\surnameend}
  (\bibinfo{year}{2011}): \emph{\bibinfo{title}{Informational derivation of
  Quantum Theory}}.
\newblock {\sl \bibinfo{journal}{Phys. Rev. A}}
  \bibinfo{volume}{84}(\bibinfo{number}{012311}),
  \doi{10.1103/physreva.84.012311}.

\bibitemdeclare{article}{Nielsen-comm}
\bibitem{Nielsen-comm}
\bibinfo{author}{R.~\surnamestart Cleve\surnameend},
  \bibinfo{author}{W.~\surnamestart van Dam\surnameend},
  \bibinfo{author}{M.~\surnamestart Nielsen\surnameend} \&
  \bibinfo{author}{Alain \surnamestart Tapp\surnameend} (\bibinfo{year}{1998}):
  \emph{\bibinfo{title}{Quantum Entanglement and the Communication Complexity
  of the Inner Product Function}}.
\newblock {\sl \bibinfo{journal}{Lecture Notes Computer Science}}
  \bibinfo{volume}{1509}, pp. \bibinfo{pages}{61--74},
  \doi{10.1016/j.tcs.2012.12.012}.

\bibitemdeclare{article}{PR-van-Dam}
\bibitem{PR-van-Dam}
\bibinfo{author}{W.~\surnamestart van Dam\surnameend} (\bibinfo{year}{2013}):
  \emph{\bibinfo{title}{Implausible consequences of superstrong nonlocality}}.
\newblock {\sl \bibinfo{journal}{Nat. Comp.}} \bibinfo{volume}{12}(\bibinfo{number}{1}), pp.
  \bibinfo{pages}{9--12}, \doi{10.1007/s11047-012-9353-6}.

\bibitemdeclare{article}{cause}
\bibitem{cause}
\bibinfo{author}{G.~\surnamestart D'Ariano\surnameend},
  \bibinfo{author}{F.~\surnamestart Manessi\surnameend} \&
  \bibinfo{author}{P.~\surnamestart Perinotti\surnameend}
  (\bibinfo{year}{2014}): \emph{\bibinfo{title}{Determinism without causality}}.
\newblock {\sl \bibinfo{journal}{Phys. Scripta}} \bibinfo{volume}{2014}(\bibinfo{number}{T163}),
  \doi{10.1088/0031-8949/2014/T163/014013}.

\bibitemdeclare{article}{massar2}
\bibitem{massar2}
\bibinfo{author}{S.~\surnamestart Fiorini\surnameend},
  \bibinfo{author}{S.~\surnamestart Massar\surnameend}, \bibinfo{author}{M.~K.
  \surnamestart Patra\surnameend} \& \bibinfo{author}{H.~R. \surnamestart
  Tiwary\surnameend} (\bibinfo{year}{2013}): \emph{\bibinfo{title}{Generalised
  probabilistic theories and conic extensions of polytopes}}.
    \newblock \urlprefix\url{http://arxiv.org/abs/1310.4125}.

\bibitemdeclare{article}{Rev}
\bibitem{Rev}
\bibinfo{author}{D.~\surnamestart Gross\surnameend},
  \bibinfo{author}{M.~\surnamestart Mueller\surnameend},
  \bibinfo{author}{R.~\surnamestart Colbeck\surnameend} \&
  \bibinfo{author}{O.~\surnamestart Dahlsten\surnameend}
  (\bibinfo{year}{2010}): \emph{\bibinfo{title}{Information processing in
  generalized probabilistic theories}}.
\newblock {\sl \bibinfo{journal}{Phys. Rev. Lett.}}
  \bibinfo{volume}{104}(\bibinfo{number}{080402}),
  \doi{10.1103/physrevlett.104.080402}.

\bibitemdeclare{article}{Hardy-2011}
\bibitem{Hardy-2011}
\bibinfo{author}{L.~\surnamestart Hardy\surnameend} (\bibinfo{year}{2011}):
  \emph{\bibinfo{title}{Reformulating and reconstructing quantum theory}}
     \newblock \urlprefix\url{http://arxiv.org/abs/1104.2066v3}.

\bibitemdeclare{article}{holevo}
\bibitem{holevo}G
\bibinfo{author}{A.~S. \surnamestart Holevo\surnameend} (\bibinfo{year}{1973}):
  \emph{\bibinfo{title}{Bounds for the quantity of information transmitted by a
  quantum communication channel}}.
\newblock {\sl \bibinfo{journal}{Problems of Information Transmission}}
  \bibinfo{volume}{9}, pp. \bibinfo{pages}{177--183}.

\bibitemdeclare{article}{LB-2014}
\bibitem{LB-2014}
\bibinfo{author}{C.~M. \surnamestart Lee\surnameend} \&
  \bibinfo{author}{J.~\surnamestart Barrett\surnameend} (\bibinfo{year}{2015}):
  \emph{\bibinfo{title}{Computation in generalised probabilistic theories}}.
\newblock {\sl \bibinfo{journal}{New Journal of Physics}}
  \bibinfo{volume}{17}(\bibinfo{number}{083001}),
  \doi{10.1088/1367-2630/17/8/083001}.

\bibitemdeclare{article}{LeeHoban}
\bibitem{LeeHoban}
\bibinfo{author}{C.~M. \surnamestart Lee\surnameend} \& \bibinfo{author}{M.~J.
  \surnamestart Hoban\surnameend} (\bibinfo{year}{2016}):
  \emph{\bibinfo{title}{Bounds on the power of proofs and advice in general
  physical theories}}.
\newblock {\sl \bibinfo{journal}{Proc. Royal Soc. A}}
  \bibinfo{volume}{472}(\bibinfo{number}{2190}), \doi{10.1098/rspa.2016.0076}.

\bibitemdeclare{article}{PBR}
\bibitem{PBR}
\bibinfo{author}{J.~Barrett \surnamestart M.~F.~Pusey\surnameend} \&
  \bibinfo{author}{T.~\surnamestart Rudolph\surnameend} (\bibinfo{year}{2012}):
  \emph{\bibinfo{title}{On the reality of the quantum state}}.
\newblock {\sl \bibinfo{journal}{Nature Physics}} \bibinfo{volume}{8}, p.
  \bibinfo{pages}{475}, \doi{10.1038/nphys2309}.

\bibitemdeclare{article}{massar1}
\bibitem{massar1}
\bibinfo{author}{S.~\surnamestart Massar\surnameend} \& \bibinfo{author}{M.~K.
  \surnamestart Patra\surnameend} (\bibinfo{year}{2014}):
  \emph{\bibinfo{title}{Information and communication in polygon theories}}.
\newblock {\sl \bibinfo{journal}{Phys. Rev. A}}
  \bibinfo{volume}{89}(\bibinfo{number}{052124}),
  \doi{10.1103/physreva.89.052124}.

\bibitemdeclare{article}{MU}
\bibitem{MU}
\bibinfo{author}{M.~\surnamestart Mueller\surnameend} \&
  \bibinfo{author}{C.~\surnamestart Ududec\surnameend} (\bibinfo{year}{2012}):
  \emph{\bibinfo{title}{The structure of reversible computation determines the
  self-duality of quantum theory}}.
\newblock {\sl \bibinfo{journal}{Phys. Rev. Lett.}}
  \bibinfo{volume}{108}(\bibinfo{number}{130401}),
  \doi{10.1103/physrevlett.108.130401}.

\bibitemdeclare{article}{almostq}
\bibitem{almostq}
\bibinfo{author}{M.~\surnamestart Navascu\'{e}s\surnameend},
  \bibinfo{author}{Y.~\surnamestart Guryanova\surnameend},
  \bibinfo{author}{M.~J. \surnamestart Hoban\surnameend} \&
  \bibinfo{author}{A.~\surnamestart Ac\'{i}n\surnameend}
  (\bibinfo{year}{2015}): \emph{\bibinfo{title}{Almost quantum correlations}}.
\newblock {\sl \bibinfo{journal}{Nature Communications}}
  \bibinfo{volume}{6}(\bibinfo{number}{6288}), \doi{10.1038/ncomms7288}.

\bibitemdeclare{article}{NY03}
\bibitem{NY03}
\bibinfo{author}{H.~\surnamestart Nishimura\surnameend} \&
  \bibinfo{author}{T.~\surnamestart Yamakami\surnameend}
  (\bibinfo{year}{2003}): \emph{\bibinfo{title}{Polynomial time quantum
  computation with advice}}.
\newblock {\sl \bibinfo{journal}{Electronic Colloquium on Computational
  Complexity}} \bibinfo{volume}{TR03-059}, \doi{10.1016/j.ipl.2004.02.005}.

\bibitemdeclare{article}{Popescu-review}
\bibitem{Popescu-review}
\bibinfo{author}{S.~\surnamestart Popescu\surnameend} (\bibinfo{year}{2014}):
  \emph{\bibinfo{title}{Nonlocality beyond quantum mechanics}}.
\newblock {\sl \bibinfo{journal}{Nature Physics}} \bibinfo{volume}{10}, pp.
  \bibinfo{pages}{264--270}, \doi{10.1038/nphys2916}.

\bibitemdeclare{article}{PRbox}
\bibitem{PRbox}
\bibinfo{author}{S.~\surnamestart Popescu\surnameend} \&
  \bibinfo{author}{D.~\surnamestart Rohrlich\surnameend}
  (\bibinfo{year}{1994}): \emph{\bibinfo{title}{Quantum nonlocality as an
  axiom}}.
\newblock {\sl \bibinfo{journal}{Found. Phys.}}
  \bibinfo{volume}{24}(\bibinfo{number}{3}), pp. \bibinfo{pages}{379--385},
  \doi{10.1007/BF02058098}.

\bibitemdeclare{article}{Klartag}
\bibitem{Klartag}
\bibinfo{author}{O.~\surnamestart Regev\surnameend} \&
  \bibinfo{author}{B.~\surnamestart Klartag\surnameend} (\bibinfo{year}{2011}):
  \emph{\bibinfo{title}{Quantum one-way communication can be exponentially
  stronger than classical communication}}.
\newblock {\sl \bibinfo{journal}{Proc. of the Forty-third Annual ACM Symposium
  on Theory of Computing (STOC 2011)}}, pp. \bibinfo{pages}{31--40},
  \doi{10.1145/1993636.1993642}.

\bibitemdeclare{article}{shor}
\bibitem{shor}
\bibinfo{author}{P.~\surnamestart Shor\surnameend} (\bibinfo{year}{1997}):
  \emph{\bibinfo{title}{Polynomial-Time Algorithms for Prime Factorization and
  Discrete Logarithms on a Quantum Computer}}.
\newblock {\sl \bibinfo{journal}{SIAM J. Sci. Statist. Comput.}}
  \bibinfo{volume}{26}, p. \bibinfo{pages}{1484},
  \doi{10.1137/s0036144598347011}.

\bibitemdeclare{article}{PR-trade-off}
\bibitem{PR-trade-off}
\bibinfo{author}{A.~J. \surnamestart Short\surnameend} \&
  \bibinfo{author}{J.~\surnamestart Barrett\surnameend} (\bibinfo{year}{2010}):
  \emph{\bibinfo{title}{Strong nonlocality: A trade-off between states and
  measurements}}.
\newblock {\sl \bibinfo{journal}{New Journal of Physics}}
  \bibinfo{volume}{12}(\bibinfo{number}{033034}),
  \doi{10.1088/1367-2630/12/3/033034}.

\end{thebibliography}
\end{document}